\documentclass{article}
\pdfoutput=1
\usepackage{amsfonts,pstricks}
\usepackage{amsmath, amsthm, amssymb,  mathrsfs, epsfig}
\usepackage{hyperref}
\usepackage{graphicx}
\usepackage{tikz}
\usepackage{caption, subcaption}
\usepackage{authblk}
\usepackage{bm}
\usepackage{bbm}
\usepackage{tensor}
\usetikzlibrary{decorations.markings,intersections}
\usepackage{enumitem}
\hypersetup{linktocpage = true}
\usepackage{tikz-cd}
%\usepackage{braids}

%\copyrightinfo{2009}{American Mathematical Society}

% \newtheorem{theorem}{Theorem}[section]

% \newtheorem{lemma}{Lemma}[section]

% \theoremstyle{definition}
% \newtheorem{definition}{Definition}[section]
% \newtheorem{example}{Example}[section]

%\input{Macros}
% Macros for Shawn

\theoremstyle{definition}
\newtheorem{theorem}{Theorem}[section]

\newtheorem{conjecture}[theorem]{Conjecture}
\newtheorem{lemma}[theorem]{Lemma}

\newtheorem{proposition}[theorem]{Proposition}
\newtheorem{remark}[theorem]{Remark}
\newtheorem{definition}[theorem]{Definition}

\DeclareMathOperator{\Hom}{Hom}

\DeclareMathOperator{\diag}{diag}

\def\Real{\mathbb{R}}

\def\Complex{\mathbb{C}}

\def\CatB{\mathcal{B}}
\def\CatC{\mathcal{C}}

\def\U{\mathbf{U}}
\def\SU{\mathbf{SU}}

\def\unit{\mathbf{1}}

\def\ket#1{|#1\rangle}

\tikzset{
    partial ellipse/.style args={#1:#2:#3}{
        insert path={+ (#1:#3) arc (#1:#2:#3)}
    }
}

\newcommand{\Rttt}{R^{\tau \tau}_\tau}
\newcommand{\Rttone}{R^{\tau \tau}_{\mathbf{1}}}
\newcommand{\Sw}{\text{SWAP}}
\newcommand{\Ent}{{entangling leakage-free}}
\newcommand{\res}[1]{|_{#1}}

\begin{document}

\begin{titlepage}

\ \\

\begin{center}

{\LARGE The Search For Leakage-free Entangling Fibonacci Braiding Gates}

\vspace{0.5cm}
Shawn X. Cui,$^{1,2}$ Kevin T. Tian,$^3$ Jennifer F. Vasquez,$^4$ Zhenghan Wang,$^{3,5}$ Helen M. Wong$\,^6$

\vspace{5mm}

{\small
\textit{
$^1$Stanford Institute for Theoretical Physics, Stanford University, Stanford, CA 94305}\\

\vspace{2mm}

\textit{
$^2$Department of Mathematics, Virginia Tech, Blacksburg, VA, 24061 }\\

\vspace{2mm}

\textit{$^3$Department of Mathematics, University of California, Santa Barbara, CA 93106}\\

\vspace{2mm}

\textit{$^4$Department of Mathematics, University of Scranton, Scranton, PA 18510}\\

\vspace{2mm}

\textit{
$^5$Microsoft Station Q, Santa Barbara, CA 93106}\\

\vspace{2mm}

\textit{
$^6$Department of Mathematical Sciences, Claremont McKenna College, Claremont, CA 91711}\\

\vspace{4mm}

{\tt cuixsh@gmail.com, ktian@math.ucsb.edu, jennifer.vasquez@scranton.edu, zhenghwa@microsoft.com, hwong@cmc.edu}

\vspace{0.3cm}
}

\end{center}

\begin{abstract}
It is an open question if there are leakage-free entangling Fibonacci braiding gates.  We provide evidence to the conjecture for the negative in this paper.  We also found a much simpler protocol to generate approximately leakage-free entangling Fibonacci braiding gates than existing algorithms in the literature.
\end{abstract}

\vfill

\end{titlepage}

\section{Introduction}

Fibonacci anyons are universal for quantum computing by braidings alone \cite{freedman2002modular}.  They are conjectured to exist in fractional
quantum Hall liquids at $\nu=\frac{12}{5}$ \cite{read1999beyond},
superconductor networks \cite{mong2014universal}, and Majorana networks
\cite{hu2018fibonacci}.  Quantum algorithms such as Shor's factoing algorithm
written for the quantum circuit model are not convenient for implementation
using Fibonacci anyons because explicit qubit structure is required. 
Moreover, the universality proof of Fibonacci anyons only guarantees
efficient approximations of two-qubit entangling gates, though this is probably adequate for all practical purposes.  It has long been an interesting open
question if there are leakage-free entangling Fibonacci braiding
gates\footnote{We are not going to touch on any other variations of the question such as using measurements and/or ancillary states.}.  

In this paper, we focus on two complementary questions: proving the non-existence
of leakage-free Fibonacci entangling gates, and finding protocols to generate
good approximations adequate for the experimental construction of a Fibonacci
quantum computer.  On the first question, we found a systematic construction
of leakage-free  braiding gates, which are then proved to be non-entangling.
We also set up a computer search with up-to-date computing technology and
found no leakage-free entangling gates either. These two results provide evidence that such leakage-free
Fibonacci braiding gates do not exist. On the second question, we 
discovered  a  much  simpler  protocol  to  generate  approximately
leakage-free entangling Fibonacci braiding gates than algorithms in the
existing literature \cite{reichardt2012systematic,carnahan2016systematically}.
  The time complexity of our
approximation  algorithm for a leakage-free entangling gate is comparable to
the standard Solovay-Kitaev algorithm; however, our algorithm performs worse
for the length of words.  The gain in simplicity and geometric intuition
justifies such a sacrifice.

After recalling some basic background on Fibonacci anyons in Sec. \ref{sec2}, we search for leakage-free braiding gates in Sec. \ref{sec:3} both analytically and numerically.  In Sec. \ref{sec:approximate}, we adapt the magical iteration from \cite{reichardt2012systematic} to a more general situation in order to find approximate 2-qubit leakage-free braiding gates.  In the last section, we conjecture that our approximation algorithm should work for more general anyons such as those in $SU(2)_k$.  We  also provide a precise formulation of the tension between universality and entangling leakage-free braiding gates for anyons.

\section{Background} \label{sec2}

\subsection{Fibonacci Anyons}
There are numerous references on topological quantum computation. See, for instance, \cite{rowell2018mathematics} among others. In particular, see \cite{delaney2016local} for an explicit setup, encoding, and calculations with anyons. An anyon system, or a unitary modular tensor category, is characterized by fusion rules, $F$-matrices, $R$-matrices, topological twists, etc. 

The Fibonacci anyon system is one of the most important and also the most elegant theories for topological quantum computation \cite{freedman2002modular,trebst2008short}.  
It consists of two anyon types, $\unit$ and $\tau$, where $\unit$ represents the vacuum and $\tau$ is a non-Abelian anyon \footnote{Strictly speaking, we need to distinguish anyon types vs anyons or (quasi)-particles  \cite{wang2010topological}.  But for Fibonacci anyons, this difference can be safely ignored.}. The only nontrivial fusion rule is $\tau \otimes \tau = \unit \oplus \tau$. For anyons $a,\ b,\ c,\ d,$ $(a,b,c;d)$ is called admissible if $d$ is a total type of $a \otimes b \otimes c$; that is, $d$ is an outcome of fusing $a$, $b$, and $c$.   If $(a,b,c;d)$ is admissible, then the $F$-matrix $F^{abc}_d$ is the $1 \times 1$ identity matrix whenever $a,\ b, \ c,$ or $d$ is $\unit$, and,
\begin{equation}
F:= F^{\tau\tau\tau}_{\tau} = 
\begin{pmatrix}
 \phi^{-1} & \sqrt{\phi^{-1}} \\
 \sqrt{\phi^{-1}} & - \phi^{-1} 
\end{pmatrix},
\end{equation}
where $\phi = \frac{1 + \sqrt{5}}{2}$ is the golden ratio.
 Note that $F$ is a real symmetric and involutary matrix.
 For $R$-symbols, we have $R^{\unit a}_a = R^{a \unit}_{a} = 1,$ $\Rttone = e^{-\frac{4\pi i}{5}},$ and $\Rttt = e^{\frac{3\pi i}{5}}$.  Denote by $R = \diag(\Rttone, \Rttt)$.

\subsection{Encoding of a Qubit}
To encode one qubit, we take three $\tau$ particles with total type $\tau$. The corresponding Hilbert space $V^{\tau\tau\tau}_{\tau}$ (or $\Hom(\tau, \tau\otimes \tau\otimes \tau)$) has dimension $2$.  We will describe two bases for $V^{\tau\tau\tau}_{\tau}$ using splitting/fusion trees.  

The first (splitting/fusion tree) basis for $V^{\tau\tau\tau}_{\tau}$ is denoted by  $\CatB_L$ and can be described as follows. We first split a $\tau$ into a pair of anyons $(x, \tau)$, and then continue to split $x$ into a pair $(\tau, \tau)$.  The splitting/fusion tree for this basis is illustrated on the lefthand side of  Figure~\ref{fig:TwoBases}. One can also think of the fusion process in reverse, namely, one fuses the first two $\tau\,'$s into $x$, and then fuses $x$ and the third $\tau$ into $\tau$. According to the fusion rules, $x$ could be either $\unit$ or $\tau$. Denote by $\ket{x}_L$ the basis element corresponding to the splitting/fusion process mentioned above. Then $\CatB_L := \{\ket{\unit}_L, \ket{\tau}_L\}$ is an orthonormal basis for $V^{\tau\tau\tau}_{\tau}$. We can encode a qubit $\Complex^2$ in $V^{\tau\tau\tau}_{\tau}$ by the map, $\ket{0} \mapsto \ket{\unit}_L$, $\ket{1} \mapsto \ket{\tau}_L$. 

Similarly, there is a different basis $\CatB_R$, shown on the righthand side of Figure~\ref{fig:TwoBases},  where one splits $\tau$ into $(\tau, y)$ followed by splitting $y$ into $(\tau,\tau)$. Again, $y$ can be either $\unit$ or $\tau$.  Denote by $\ket{y}_R$ the corresponding the basis element and $\CatB_R = \{\ket{\unit}_R, \ket{\tau}_R\}$. Both $\CatB_L$ and $\CatB_R$ are called the computational bases for the one-qubit space $V^{\tau\tau\tau}_{\tau}$.  They are related by the matrix $F$:
\begin{equation}
\ket{y}_L = \sum\limits_{x=\unit,\tau} F_{xy} \ket{x}_R
\end{equation}
for $y = \unit, \tau$, and where it is understood that $F_{\unit\unit} = F_{11}$, $F_{\unit \tau} = F_{12}$,$F_{ \tau \unit} = F_{21}$, and $F_{\tau \tau} = F_{22}$. 
\begin{figure}
\centering
\includegraphics[width = 1 \linewidth]{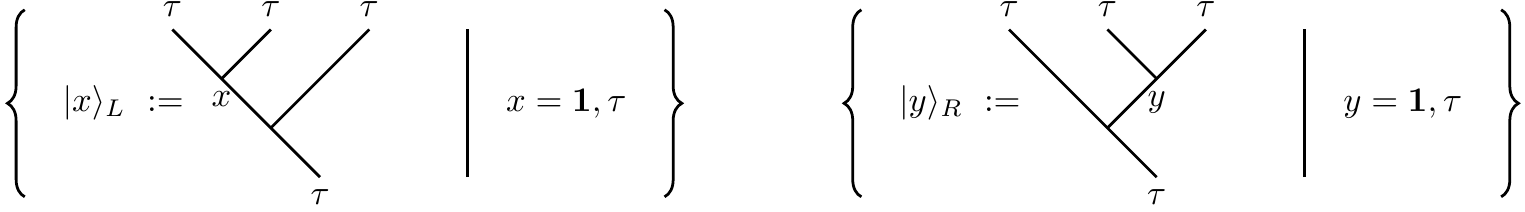}
\caption{Two splitting/fusion tree bases for $V_{\tau}^{\tau\tau\tau}$.}\label{fig:TwoBases}
\end{figure}

\begin{figure}
    \centering
    \includegraphics[width=1in]{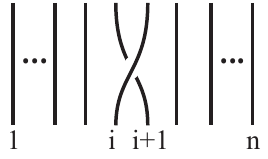}
    \caption{Braid generator $\sigma_{i, n}$}
    \label{fig:BraidGenerator}
\end{figure}

We next describe the action of the braid group.  Recall that the $n$-strand braid group $B_n$ has the presentation,
\begin{equation}
B_n = \langle \sigma_1, \cdots, \sigma_{n-1}\ | \ \sigma_{i}\sigma_{i+1}\sigma_{i} = \sigma_{i+1}\sigma_{i}\sigma_{i+1}, \ \sigma_{i}\sigma_{j} = \sigma_{j}\sigma_{i}, |i-j|>1 \rangle, 
\end{equation}
where the convention is that $\sigma_i$ corresponds to the braid diagram such that the $i$-th strand goes over the $(i+1)$-th strand, as illustrated in Figure~\ref{fig:BraidGenerator}.

The encoding of the three $\tau$ particles described above leads to a unitary representation of the three-strand braid group,
\begin{equation}
\rho_{3}\colon B_3 \longrightarrow \U(V^{\tau\tau\tau}_{\tau}). 
\end{equation}
 Denote by $\rho_3^L(\sigma)$ (resp. $\rho_3^{R}(\sigma)$) the matrix of a braid $\sigma$ under the basis $\CatB_L$ (resp. $\CatB_R$). Then,

\begin{equation}\label{eq:rho3L_sigma1}
    \rho_3^L(\sigma_1) = \rho_3^R(\sigma_2) = R = \diag(\Rttone, \Rttt),
\end{equation}
\begin{equation}\label{eq:rho3L_sigma2}
    \rho_3^L(\sigma_2) = \rho_3^R(\sigma_1) = FRF =
    \left(
    \begin{array}{cc}
    e^{\frac{4\pi i}{5}}\phi^{-1} & e^{-\frac{3\pi i}{5}}\sqrt{\phi^{-1}} \\
    e^{-\frac{3\pi i}{5}}\sqrt{\phi^{-1}}    & - \phi^{-1} \\
    \end{array}\right)
\end{equation}

Thus, under the two bases $\CatB_L, \CatB_R$, the matrices of $\sigma_1$ and $\sigma_2$ are swapped. They generate the same group under either basis, so that there is essentially no difference between $\CatB_L$ and $\CatB_R$. As a default convention, by computational basis, we will take to mean $\CatB_L$ unless explicitly stated otherwise.  The matrices $\rho_3(\sigma):=\rho_3^L(\sigma)$ are called $1$-qubit quantum gates.

It is well-known that the $\rho_3(\sigma_1)$ and $\rho_3(\sigma_2) $ generate a dense subgroup of $\U(2)$ up to phases \cite{freedman2002modular}.  Interestingly, in the $F$-matrix of the Fibonacci theory lies in the image.  Explicitly, it follows from the identities  $(RF)^3 = \Rttone I_2$ and $F^2 = I_2$ that 
\[\rho_3(\sigma_1\sigma_2\sigma_1)=\Rttone F.\]
Moreover, \cite{kliuchnikov2014asymptotically} provides an  asymptotically optimal algorithm which approximates an arbitrary unitary matrix using products of the generators $\rho_3(\sigma_1)$ and $\rho_3(\sigma_2)$ and characterizes the exact image of $B_3$ from the Fibonacci theory.

\subsection{Encoding of 2-qubits}\label{sec:2q}

Let $\Sw \in \U(\Complex^2 \otimes \Complex^2)$ be the 2-qubit gate mapping $\ket{i,j}$ to $\ket{j,i}$, $i,j = 0,1$.  Alternatively, $\Sw$ is the $4 \times 4$ permutation matrix obtained by exchanging the second and third rows of a $4 \times 4$ identity matrix.  

Recall that a 2-qubit gate $U \in \U(\Complex^2 \otimes \Complex^2)$ is called \emph{non-entangling} if one of the following conditions is satisfied (and the other condition will hold as a consequence).
\begin{enumerate}
    \item $U$ is of the form $A \otimes B$ or $\Sw \circ (A \otimes B)$ for some $1$-qubit gates $A,B \in \U(\Complex^2)$.
    \item $U$ maps product states to product states. That is, for any $\ket{x}, \ket{y} \in \Complex^2$, there exist $\ket{u}, \ket{v} \in \Complex^2$ such that $U(\ket{x} \otimes \ket{y}) = \ket{u} \otimes \ket{v}$.
\end{enumerate}
$U$ is called \emph{entangling} otherwise.  Note that the non-entangling gates form a subgroup.

All 1-qubit gates together with any entangling 2-qubit gate is universal. Hence any universal gate set for 1-qubit gates plus an entangling 2-qubit gate is a universal gate set for all qubits. This shows that entangling gates are essential for quantum computing, and in this paper, we investigate whether such entangling 2-qubit gates can arise from the Fibonacci theory. 

In particular, we are concerned with the encoding of 2-qubits obtained from six $\tau$ particles from the Fibonacci theory with total type trivial. Explicitly, we group the first three $\tau$ particles to form the first qubit and group the last three to form the second qubit. We further require the total type of each group of anyons to be trivial. The resulting Hilbert space $V_{\unit}^{\tau^{\otimes 6}}$ of six $\tau$ particles with total type trivial has dimension five. 
The four in Figure~\ref{fig:TwoQubits} are denoted by  $\ket{\unit\unit}, \ket{\unit\tau}, \ket{\tau\unit},\ket{\tau\tau}$ and span the \emph{computational subspace} $V_C$. The element $\ket{NC}$ in Figure~\ref{fig:NC} we call the \emph{non-computational state}.  Thus $V_{\unit}^{\tau^{\otimes 6}} = \text{span}\{\ket{NC}\} \oplus V_C$.

The computational subspace $V_C$ encodes 2-qubits in the way described in Figure~\ref{fig:TwoQubits}.  Note that the basis $\CatB_L$ is used for the first qubit, while $\CatB_R$ for the second qubit.  As mentioned in the previous subsection, there is essentially no difference between the two bases. The particular choice here is simply for notational convenience.  To emphasize this encoding of  two qubits, we will write $V_C= V^{\tau\tau\tau}_{\tau} \otimes V^{\tau\tau\tau}_{\tau}$.  
%The basis elements in $\CatB$ are also denoted by $\{b_{-1}, b_0,b_1,b_2,b_3\}$ in the order where they are listed above. 

\begin{figure}
\centering
\includegraphics[scale = 1]{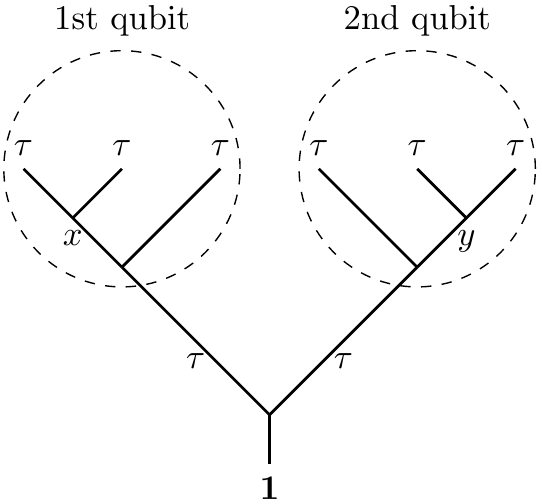}
\caption{The encoding of two qubits where $x,y = \unit, \tau$. }\label{fig:TwoQubits}
\end{figure}
\begin{figure}
\centering
\includegraphics[scale = 1]{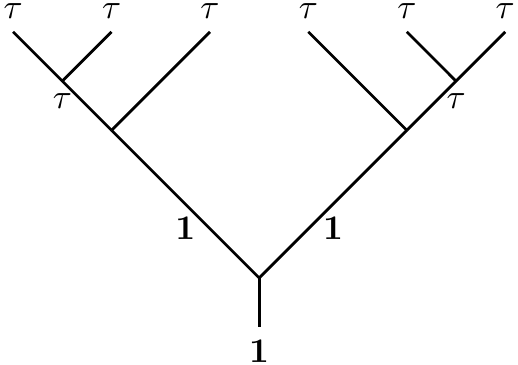}
\caption{The non-computational basis element.}\label{fig:NC}
\end{figure}

By braiding, we obtain a unitary representation of the six-strand braid group,
\begin{equation}
\rho_{6}\colon B_6 \longrightarrow \U(V^{\tau^{\otimes 6}}_{\tau}). 
\end{equation}
Let  $P_{14}$ be the permutation matrix obtained by exchanging the first and fourth rows of a $5 \times 5$ identity matrix.  Recall that $I_2$ is the $2 \times 2$ identity matrix.    By convention, the tensor product $A \otimes B $ is the matrix of the form $(a_{ij} B)$.

Direct calculation shows that the matrices of the braid group generators under the basis  $\{ \ket{NC}, \ket{\unit\unit}, \ket{\unit\tau}, \ket{\tau\unit},\ket{\tau\tau} \}$ are represented by,
\begin{eqnarray}
\rho_6(\sigma_1) &=& (\Rttt) \oplus (R \otimes I_2) \label{eqn:rhosigma1}\\
\rho_6(\sigma_2) &=& (\Rttt) \oplus (FRF \otimes I_2)\\
\rho_6(\sigma_3) &=& P_{14} \left((\Rttt) \oplus R \oplus FRF\right) P_{14}\\
\rho_6(\sigma_4) &=& (\Rttt) \oplus (I_2 \otimes FRF)\\
\rho_6(\sigma_5) &=& (\Rttt) \oplus (I_2 \otimes R).
\end{eqnarray}

Note that the formula for $\rho_6(\sigma_3)$ means that when restricting to the subspace $\text{span}\{\ket{NC}, \ket{\tau\tau}\}$ it is equal to $\rho_3(\sigma_2) = FRF$. We will use this fact later in Section \ref{sec:approximate}.

%Note that Lemma~\ref{lemma:Fisbraid} can be generalized to   $\rho_6(\sigma_1\sigma_2\sigma_1)=(\Rttt )^3\oplus (F\otimes I_2)$ and $\rho_6(\sigma_4\sigma_5\sigma_4)=  (\Rttt)^3 \oplus (I_2 \otimes F)$.

\begin{definition}
A unitary acting on $V^{\tau^{\otimes 6}}_{\tau}$ is called \emph{leakage-free} if it preserves the 4-dimensional computational subspace $V_C$.
\end{definition}

Equivalently, a unitary is leakage-free if its $(1,1)$-entry has norm equal to~$1$. To perform quantum computing, we need to have leakage-free gates to avoid information leakage. We also allow the states to go out of the computational subspace temporarily if they are performed in a controlled way. 
%For instance, $\rho_6(\sigma_1)$ is a leakage-free gate. However, $\rho_6(\sigma_1)$ is also non-entangling.  

%Let $\Sw \in \U(\Complex^2 \otimes \Complex^2)$ be the 2-qubit gate mapping $\ket{i,j}$ to $\ket{j,i}$, $i,j = 0,1$.  Alternatively, $\Sw$ is the $4 \times 4$ permutation matrix obtained by exchanging the second and third rows of a $4 \times 4$ identity matrix.  
%\begin{definition}[(?)]
%A 2-qubit gate $U \in \U(\Complex^2 \otimes \Complex^2)$ is called non-entangled\footnote{In (?), non-entangled gates are called primitive.} if one of the following conditions is satisfied (and the other condition will hold as a consequence).
%\begin{enumerate}
%    \item $U$ is of the form $A \otimes B$ or $\Sw \circ (A \otimes B)$ for some $1$-qubit gates $A,B \in \U(\Complex^2)$.
%    \item $U$ maps product states to product states. That is, for any $\ket{x}, \ket{y} \in \Complex^2$, there exist $\ket{u}, \ket{v} \in \Complex^2$ such that $U(\ket{x} \otimes \ket{y}) = \ket{u} \otimes \ket{v}$.
%\end{enumerate}
%$U$ is called entangled otherwise.
%\end{definition}

%By (?), 1-qubit gates together with any entangled 2-qubit is universal. Hence any universal gate set for 1-qubit gates plus an entangled 2-qubit gate is a universal gate set for all qubits. This shows that entangled gates are essential for quantum computing. Note that non-entangled gates form a subgroup.

In the Fibonacci 2-qubit model, if a braiding gate $\rho_6(\sigma)$ is leakage-free, then we say it is entangling if the restriction of $\rho_6(\sigma)$ on $V_C$ is entangling with respect to the decomposition $V_C = V^{\tau\tau\tau}_{\tau} \otimes V^{\tau\tau\tau}_{\tau}$.  For example, we see from Equation (\ref{eqn:rhosigma1}) for the first braid generator $\sigma_1$ produces a leakage-free gate. However, it is not entangling since $\rho_6(\sigma_1)\res{V_C} = R \otimes I_2$.

%In this paper, we are searching for leakage-free entangled braiding gates.
It has been long suspected that, in the Fibonacci model, there are no braids that realize exactly leakage-free entangling gates. Our results in the next section support such a possibility.

\iffalse
\subsection{Entangling gates}

One purpose of this paper is to investigate whether it is possible to find a leakage-free gate that is also entangling.  Recall that entangling means....

DEFINE ENTANGLING.   
$SWAP = \begin{pmatrix} 1&0&0&0\\
0&0&1&0\\
0&1&0&0\\
0&0&0&1\\
\end{pmatrix}$

PROPOSITION:  The product of two non-entangling gates is non-entangling.  
$(A \otimes B)(C \otimes D) = AC \otimes BD$, and $SWAP(A \otimes B) = (B \otimes A) SWAP$ for $2 \times 2$ matrices. 

%\subsection{Compiling}
%Realize the standard gate set with $\{\sigma_1, F, \bigwedge(\sigma_1^2)\}$.
\fi

\section{Leakage-free gates}\label{sec:3}

The formulas from Section~\ref{sec2} for the gates $\rho_6(\sigma_1)$, $\rho_6(\sigma_2)$, $\rho_6(\sigma_4)$, and $\rho_6(\sigma_5)$ immediately imply that they are leakage-free and non-entangling on $V_C$.   Thus, because the non-entangling gates form a closed subgroup,  any word in the braid group generators $\sigma_1, \sigma_2, \sigma_4$ and $\sigma_5$ will also be leakage-free and non-entangling.  In this section we will consider two other braids, $\Delta$ and $\Sigma$, that also produce leakage-free, non-entangling gates.

 \begin{figure}[h]
  \centering
  
  \begin{minipage}{.6in}
    \includegraphics[width=\textwidth]{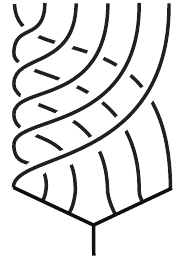}
\end{minipage}    
\; = \; 
\begin{minipage}{.6in}
    \includegraphics[width=\textwidth]{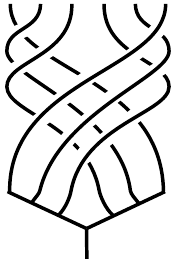}
\end{minipage}    
%\; = \; 
%\begin{minipage}{.6in}
%    \includegraphics[width=\textwidth]{Deltav33R.pdf}
%\end{minipage}    
    \caption{The half-twist $\Delta$ applied to a splitting/fusion tree.} \label{fig:Delta}
\end{figure}

\begin{lemma} \label{lemma:Delta} Let $\Delta = \sigma_1 (\sigma_2 \sigma_1)(\sigma_3 \sigma_2 \sigma_1 )( \sigma_4\sigma_3 \sigma_2 \sigma_1)(\sigma_5\sigma_4\sigma_3 \sigma_2 \sigma_1)$.  Then \[\rho_6(\Delta) = (\Rttone)^3\cdot( I_1 \oplus\Sw) \]
\end{lemma}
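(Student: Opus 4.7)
The braid $\Delta = \sigma_1 (\sigma_2 \sigma_1) \cdots (\sigma_5 \sigma_4 \sigma_3 \sigma_2 \sigma_1)$ is the standard positive half-twist generator of $B_6$; geometrically it rotates the six strands by $\pi$ about a central vertical axis, sending strand $i$ to strand $7-i$, and its square generates the center of $B_6$. The plan is to first use this geometric picture to pin down the shape of $\rho_6(\Delta)$ up to an overall scalar, and then to compute the scalar.

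For the structural part, I would follow the hint encoded in Figure~\ref{fig:Delta}: applying $\Delta$ to any splitting/fusion tree representing a vector in $V^{\tau^{\otimes 6}}_{\unit}$ reflects the tree left-to-right. Because the computational encoding of $V_C$ uses $\CatB_L$ on the first three $\tau$'s but $\CatB_R$ on the last three, and because $\CatB_L$ and $\CatB_R$ are precisely the mirror images of one another (Figure~\ref{fig:TwoBases}), this left--right reflection sends $\ket{xy}$ to a phase times $\ket{yx}$. Similarly $\ket{NC}$, whose tree is symmetric under the reflection (both triples fuse to $\unit$), is sent to a phase times itself. Hence $\rho_6(\Delta)$ must have the form $\lambda\,(I_1 \oplus \Sw)$ for a single scalar $\lambda$. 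One way to see that all the candidate phases coincide into a single $\lambda$ is to invoke the conjugation relation $\rho_6(\Delta)\,\rho_6(\sigma_i)\,\rho_6(\Delta)^{-1} = \rho_6(\sigma_{6-i})$ against the explicit matrices in Section~\ref{sec2}; another is to observe that $\Delta^2$ is central in $B_6$, so $\rho_6(\Delta^2)$ is a scalar, namely $\theta_{\unit}/\theta_\tau^6 = (\Rttone)^6$, which forces $\rho_6(\Delta)$ to be a scalar multiple of a square root of $I_5$ compatible with the SWAP structure.

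This pins $\lambda$ down up to sign: $\lambda^2 = (\Rttone)^6$, so $\lambda = \pm (\Rttone)^3$. To fix the sign I would carry out a single direct computation---for instance, evaluating $\rho_6(\Delta)\ket{NC}$ by successively applying the 15 generator matrices of Section~\ref{sec2} and tracking how each one acts on the running intermediate state. Each $\sigma_i$ contributes either a scalar phase ($\Rttone$ or $\Rttt$) or a nontrivial change of basis (via $FRF$ or the permutation $P_{14}$), depending on the intermediate fusion labels. The output should simplify to $(\Rttone)^3 \ket{NC}$, fixing $\lambda = (\Rttone)^3$. The main obstacle is precisely this sign-fixing calculation: the SWAP skeleton and the value of $\lambda^2$ drop out quickly from the geometric and categorical structure, but the 15-letter word demands careful bookkeeping to evaluate exactly.
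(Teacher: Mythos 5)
Your route is genuinely different from the paper's. The paper isotopes the half-twist into the product $(\sigma_1\sigma_2\sigma_1)(\sigma_5\sigma_4\sigma_5)\cdot(\sigma_3\sigma_2\sigma_1)(\sigma_4\sigma_3\sigma_2)(\sigma_5\sigma_4\sigma_3)$ and multiplies three explicitly known block matrices, namely $(\Rttt)^3\oplus(\Rttone F\otimes I_2)$, $(\Rttt)^3\oplus(I_2\otimes\Rttone F)$, and $I_1\oplus\bigl(\Rttone(F\otimes F)\Sw\bigr)$; the $F$'s cancel and the phase $(\Rttone)^3$ comes out exactly, with no residual ambiguity. Your symmetry argument --- the half-twist as the left--right mirror of the splitting tree (which is exactly why the $\CatB_L$/$\CatB_R$ convention makes $\Sw$ appear), the relation $\Delta\sigma_i\Delta^{-1}=\sigma_{6-i}$ checked against the explicit generator matrices to force a single overall phase, and the full-twist eigenvalue $\rho_6(\Delta^2)=(\Rttone)^6 I_5$ to constrain that phase --- is a nice conceptual explanation and buys more insight into why the answer has the form $\lambda(I_1\oplus\Sw)$. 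Two rigor points to tidy up: the ``phase times $\ket{yx}$, no mixing'' structure is read off a picture rather than derived, and ``$\Delta^2$ central implies scalar'' needs either irreducibility of the $5$-dimensional representation or the ribbon full-twist formula; both are standard, and your conjugation check does pin all the phases to one $\lambda$ once the no-mixing structure is granted (commuting with $FRF\otimes I_2$, $I_2\otimes FRF$, and $\rho_6(\sigma_3)$ forces the five diagonal phases to agree).

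The one genuine gap is the sign. You reduce to $\lambda=\pm(\Rttone)^3$ and then defer the decision to an unexecuted product of fifteen generator matrices applied to $\ket{NC}$, which you yourself flag as the main obstacle; asserting that ``the output should simplify to $(\Rttone)^3\ket{NC}$'' is precisely the point still to be proved, so as written the lemma is not yet established. You can close this much more cheaply than by the proposed bookkeeping: compare determinants. All generators of $B_6$ are conjugate, and $\det\rho_6(\sigma_1)=\Rttt\cdot(\Rttone\Rttt)^2=e^{\pi i/5}$, so $\det\rho_6(\Delta)=e^{15\pi i/5}=-1$. On the other hand $\det\bigl(\lambda(I_1\oplus\Sw)\bigr)=-\lambda^5$, hence $\lambda^5=1$. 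Since $\bigl((\Rttone)^3\bigr)^5=1$ while $\bigl(-(\Rttone)^3\bigr)^5=-1$, the sign is forced and $\lambda=(\Rttone)^3$, completing your argument without any long computation.
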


\begin{proof}
 $\Delta$ is the half-twist, as illustrated on the left hand side in Figure~\ref{fig:Delta}. 
 Isotope $\Delta$ as in the ride hand side and rewrite it as 
 %illustrated  on the right of Figure~\ref{fig:Delta}, corresponding to 
 the product
\[\Delta = (\sigma_1 \sigma_2\sigma_1)\cdot (\sigma_5\sigma_4\sigma_5) \cdot  (\sigma_3\sigma_2\sigma_1)(\sigma_4\sigma_3\sigma_2)(\sigma_5\sigma_4\sigma_3).\] 
Recall from Section~\ref{sec2} that $\rho_6(\sigma_1 \sigma_2 \sigma_1) = {(\Rttt)^3} \oplus ( \Rttone F \otimes I_2)$, and  $\rho_6(\sigma_5\sigma_4 \sigma_5) = {(\Rttt)^3} \oplus (I_2 \otimes \Rttone F)$. Furthermore,
\[ \rho_6((\sigma_3\sigma_2\sigma_1)(\sigma_4\sigma_3\sigma_2)(\sigma_5\sigma_4\sigma_3)) = I_1 \oplus ( \Rttone (F \otimes F) \Sw) \]

With $(\Rttt)^2 = \Rttone$, the formula for $\rho_6(\Delta)$ then follows immediately. 
\end{proof}

%We remark that$\rho_6(\Delta^2) = \Rttone I_5$, in agreement with the fact that  the square $\Delta^2$ generates the center of the braid group $B_6$ and $\rho_6$ is irreducible.     

Next, we explain the topological procedure that led us to the pure braid  $\Sigma = (\sigma_3\sigma_2\sigma_1)( \sigma_1 \sigma_2\sigma_3)$, which yields a leakage-free gate.  Start with a braid on four strands which returns the first strand to its leftmost position.  Such a braid belongs in the annular braid group, which is generated by $\sigma_1^2$, $\sigma_2$, and $\sigma_3$ in $B_4$ \cite{birman2016braids}. Now replace the first strand by three parallel strands to obtain a braid on six strands, which is a product of $\Sigma$, $\sigma_4$, and $\sigma_5$ in $B_6$. Any braid obtained in this way preserves $V_C$.  $\Sigma$ is illustrated in Figure~\ref{fig:Sigma}, and a computation yields the following lemma, from which it is also easy to see that $\Sigma$ produces a non-entangling gate.    
\begin{figure}[h]
\centering
  \begin{minipage}{.7in}
    \includegraphics[width=\textwidth]{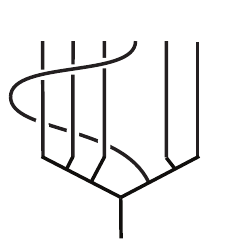}
\end{minipage}    
%\; = \; 
%\begin{minipage}{.7in}
%    \includegraphics[width=\textwidth]{321123b.pdf}
%\end{minipage}     
    \caption{The pure braid $\Sigma$}
    \label{fig:Sigma}
\end{figure}

\begin{lemma} Let $\Sigma = (\sigma_3\sigma_2\sigma_1)( \sigma_1 \sigma_2\sigma_3)$.  Then $\rho_6(\Sigma) = I_1 \oplus ( I_2 \otimes R^2)$ 
\end{lemma}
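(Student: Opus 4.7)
The plan is to verify the identity by a direct matrix computation organized around a topological prediction of the answer.

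Geometrically, $\Sigma = \sigma_3\sigma_2\sigma_1^2\sigma_2\sigma_3$ is the pure braid that drags strand $1$ over strands $2, 3, 4$ and returns it under them, tracing a single loop of strand $1$ around the cluster $\{2, 3, 4\}$. In a fusion tree where strands $2, 3, 4$ are first fused into an intermediate type $w$, this loop acts on each fusion channel of $\tau \otimes w$ by the monodromy scalar $R^{\tau w} R^{w \tau}$: trivially when $w = \unit$, and as $(\Rttone)^2$ or $(\Rttt)^2$ on the two channels of $\tau \otimes \tau$ when $w = \tau$. Counting the five basis states by their $(w, u)$ labels yields the eigenvalue multiset $\{1, (\Rttone)^2, (\Rttone)^2, (\Rttt)^2, (\Rttt)^2\}$, which matches the multiset of diagonal entries of $I_1 \oplus (I_2 \otimes R^2)$. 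This predicts that $\rho_6(\Sigma)$ is diagonal in the computational basis with exactly those eigenvalues.

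To turn this prediction into a proof, I would carry out the $5 \times 5$ matrix computation using the formulas in Section~\ref{sec2}. The factor $\rho_6(\sigma_1)^2 = (\Rttt)^2 \oplus (R^2 \otimes I_2)$ is already diagonal and preserves the decomposition $\text{span}\{\ket{NC}\} \oplus V_C$, and the same is true of $\rho_6(\sigma_2) = (\Rttt) \oplus (FRF \otimes I_2)$, so the sandwich $\rho_6(\sigma_2)\rho_6(\sigma_1)^2\rho_6(\sigma_2)$ is block-diagonal with first-qubit factor $FRF \cdot R^2 \cdot FRF$. The delicate step is then sandwiching further by $\rho_6(\sigma_3) = P_{14}\bigl((\Rttt) \oplus R \oplus FRF\bigr)P_{14}$, since $P_{14}$ swaps $\ket{NC}$ with $\ket{\tau\tau}$ and so breaks the $\text{span}\{\ket{NC}\} \oplus V_C$ decomposition in the intermediate stage.

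The main obstacle is this $P_{14}$-bookkeeping. The clean way to handle it is via the observation at the end of Section~\ref{sec:2q} that on $\text{span}\{\ket{NC}, \ket{\tau\tau}\}$ the operator $\rho_6(\sigma_3)$ acts as $FRF$, exactly as $\rho_3^L(\sigma_2)$. Grouping $\ket{NC}$ with $\ket{\tau\tau}$ into a single effective one-qubit subsystem reduces the $5 \times 5$ computation to coordinated one-qubit calculations, which follow from the closed-form expressions for $F$ and $R$ in Section~\ref{sec2}. Reassembling the blocks in the computational basis ordering $\{\ket{NC}, \ket{\unit\unit}, \ket{\unit\tau}, \ket{\tau\unit}, \ket{\tau\tau}\}$ produces the predicted formula $\rho_6(\Sigma) = I_1 \oplus (I_2 \otimes R^2)$, which is manifestly leakage-free and acts as the identity on the first qubit, hence non-entangling on $V_C$.
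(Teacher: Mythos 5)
Your actual proof---the direct $5\times 5$ computation organized by the splitting $\text{span}\{\ket{NC},\ket{\tau\tau}\}\oplus\text{span}\{\ket{\unit\unit},\ket{\unit\tau},\ket{\tau\unit}\}$---is sound and is essentially what the paper does (the paper simply asserts the lemma ``by computation'' after its cabling construction). Concretely, $\rho_6(\sigma_2\sigma_1^2\sigma_2)=(\Rttt)^4\oplus\bigl((FRF\,R^2\,FRF)\otimes I_2\bigr)$, and since $FRF\,R^2\,FRF=\diag(1,(\Rttt)^2)$ this middle factor is diagonal and hence respects your splitting---a point your outline leaves implicit but which must be checked, since otherwise ``grouping $\ket{NC}$ with $\ket{\tau\tau}$'' would not reduce the conjugation by $\rho_6(\sigma_3)$ to one-qubit blocks. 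With that in place, on $\text{span}\{\ket{NC},\ket{\tau\tau}\}$ you get $FRF\,R^2\,FRF=\diag(1,(\Rttt)^2)$, on the complement $\diag((\Rttone)^2,(\Rttt)^2,(\Rttt)^4)$, and the identity $(\Rttt)^2=\Rttone$ assembles these into $I_1\oplus(I_2\otimes R^2)$. One small slip: $P_{14}$ swaps $\ket{NC}$ with $\ket{\tau\unit}$ (the fourth basis vector), not with $\ket{\tau\tau}$; the fact you actually need is the one you cite, namely that $\rho_6(\sigma_3)$ acts as $FRF$ on $\text{span}\{\ket{NC},\ket{\tau\tau}\}$ and as $\diag(\Rttone,\Rttt,\Rttt)$ on the rest.

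The topological ``prediction'' in your first paragraph, however, is incorrect, and it is worth fixing because the right picture is exactly the paper's construction. In $\Sigma=(\sigma_3\sigma_2\sigma_1)(\sigma_1\sigma_2\sigma_3)$ strand $1$ never crosses strands $2$ or $3$; it is strand $4$ that passes under strands $3,2,1$ and returns over them, so $\Sigma$ is the monodromy of strand $4$ around the bundle $\{1,2,3\}$ (the cabling of $\sigma_1^2\in B_4$ with the first strand tripled, as in the paper's annular-braid description), not a loop of strand $1$ around $\{2,3,4\}$. Moreover, the inference ``same eigenvalue multiset, hence diagonal in the computational basis with those eigenvalues'' is not valid: the monodromy of strand $1$ around $\{2,3,4\}$ has the same spectrum but is diagonal in a basis adapted to the charge of $\{2,3,4\}$, which is not one of the computational labels, and that operator is not diagonal on $V_C$. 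The heuristic that does predict the lemma is: the bundle $\{1,2,3\}$ carries a definite charge ($\tau$ on $V_C$, $\unit$ on $\ket{NC}$), and its fusion channel with strand $4$ equals the charge of $\{5,6\}$, i.e.\ the second-qubit label $y$; hence the monodromy acts by $1$ on $\ket{NC}$ and by $(R^{\tau\tau}_y)^2$ on $\ket{xy}$, independently of $x$, which is precisely $I_1\oplus(I_2\otimes R^2)$.
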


We remark that we could instead have arrived at the pure braid $\Sigma$ by starting with a braid on four strands which moves the first strand to the rightmost position, and then replacing the first strand with three parallel strands.  In that case, we produce a braid on six strands that is a product of $\Sigma$, $\sigma_4$, $\sigma_5$, and $(\sigma_3\sigma_2\sigma_1)(\sigma_4\sigma_3\sigma_2)(\sigma_5\sigma_4\sigma_3)$ in $B_6$. Recall from our proof of Lemma~\ref{lemma:Delta} that $(\sigma_3\sigma_2\sigma_1)(\sigma_4\sigma_3\sigma_2)(\sigma_5\sigma_4\sigma_3)$ can be written as a product of $\Delta$, $\sigma_1$, $\sigma_2$, $\sigma_4$, and $\sigma_5$.  Thus, while the resulting braid will also yield a leakage-free gate, it is one that we've seen already.  
\\

We summarize the above results in the following theorem.  

\begin{theorem} \label{thm:DeltaSigma}  
%Let $\Delta$ be the half-twist $\sigma_1 (\sigma_2 \sigma_1)(\sigma_3 \sigma_2 \sigma_1 )( \sigma_4\sigma_3 \sigma_2 \sigma_1)(\sigma_5\sigma_4\sigma_3 \sigma_2 \sigma_1)$ and let $\Sigma$ be the pure braid $(\sigma_3\sigma_2\sigma_1)( \sigma_1 \sigma_2\sigma_3)$.

Any word $w$ in $\sigma_1, \sigma_2, \sigma_4, \sigma_5,\Delta$, and $\Sigma $ produces a gate that is leakage-free and  non-entangling on the computational subspace $V_C$. \\
\end{theorem}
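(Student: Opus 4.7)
The plan is to reduce the claim to a finite check by exploiting two subgroup properties. First, the leakage-free unitaries on $V_{\unit}^{\tau^{\otimes 6}}$ form a subgroup of $\U(V_{\unit}^{\tau^{\otimes 6}})$, since preserving the subspace $V_C$ is closed under composition and inversion of unitaries. Second, as already noted in Section~\ref{sec:2q}, the non-entangling gates on $V_C$ form a subgroup of $\U(V_C)$. The intersection is again a subgroup, so it suffices to verify that each of the six generators $\rho_6(\sigma_1), \rho_6(\sigma_2), \rho_6(\sigma_4), \rho_6(\sigma_5), \rho_6(\Delta), \rho_6(\Sigma)$ lies in that intersection. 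Closure under inversion then automatically takes care of words involving $\sigma_i^{-1}$, $\Delta^{-1}$, and $\Sigma^{-1}$.

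The four elementary braid generators are handled by reading off their explicit matrices from the end of Section~\ref{sec:2q}: each $\rho_6(\sigma_i)$ for $i\in\{1,2,4,5\}$ has block form $\Rttt \oplus (M_1 \otimes M_2)$ for some $M_1, M_2 \in \U(2)$, so it preserves $V_C$ and its restriction to $V_C$ is the tensor product $M_1 \otimes M_2$, which is non-entangling by condition~(1) of the definition. For $\Sigma$, the lemma just proved gives $\rho_6(\Sigma) = I_1 \oplus (I_2 \otimes R^2)$, whose restriction to $V_C$ is again a tensor product, hence non-entangling. For $\Delta$, Lemma~\ref{lemma:Delta} gives $\rho_6(\Delta) = (\Rttone)^3(I_1 \oplus \Sw)$, which is manifestly leakage-free; the restriction to $V_C$ is $(\Rttone)^3\Sw = \Sw \circ \left((\Rttone)^3 I_2 \otimes I_2\right)$, matching the second form in condition~(1), and therefore non-entangling as well.

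There is essentially no real obstacle, since every ingredient is already in place: the explicit formulas from Section~\ref{sec:2q} together with the two preceding lemmas cover the six generators one by one. The only point requiring a moment of care is the subgroup-closure reduction, which silently packages the fact that products and inverses of leakage-free non-entangling gates remain leakage-free and non-entangling; this in turn rests on the already-stated closure of the non-entangling subgroup of $\U(V_C)$ and the obvious stability of $V_C$-preserving unitaries under composition and inversion. Once that observation is made, the theorem is just a compact summary of Section~\ref{sec:2q} and the two lemmas.
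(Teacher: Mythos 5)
Your proposal is correct and follows essentially the same route as the paper: verify the six generators individually (the $\sigma_i$ from their explicit block formulas, $\Delta$ and $\Sigma$ from the two lemmas) and invoke closure of the leakage-free and non-entangling gates under products and inverses. The paper treats the theorem as a summary of exactly these facts, so your write-up simply makes the same argument explicit.
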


\begin{remark}
Topological constructions similar to used in Theorem~\ref{thm:DeltaSigma} may be used to obtain braids which preserve subspaces other than $V_C$.  Often, the braids turn out to be entangling on the complement of the preserved subspace.  

In particular, to find an infinite family of braids which fixes subspace spanned by $\ket{\unit\unit}$, we may start with a pure braid on three strands and double every strand.  We may further take products with $\sigma_1, \sigma_2, \sigma_4, \sigma_5$, and $\Delta$, and still obtain gates which fix $\ket{\unit \unit}$ up to a phase.  Interestingly, unlike the situation with the non-computational $\ket{NC}$, many of the gates that fix $\ket{\unit \unit}$ up to a phase are entangling on the complementary 4-dimensional subspace.  For example, it can be shown that $\rho_6((\sigma_2 \sigma_3)^3)$ fixes $\ket{\unit \unit}$ up to a phase, does not fix $\ket{NC}$, and is entangling on the basis elements $\ket{NC}, \ket{\unit \tau}, \ket{\tau \unit}$ and $\ket{\tau \tau}$.  

To obtain braids that fix $\ket{\unit \tau}$ and $\ket{\tau \unit}$, choose a annular braid on five strands and double the first or last.  As above, many of the resulting gates are entangling on the complementary 4-dimensional subspace.  For example, $\rho_6((\sigma_2 \sigma_3)^3)$ fixes $\ket{\tau \unit}$ up to a phase,  does not fix $\ket{NC}$,  and is entangling on the basis elements $\ket{NC}, \ket{\unit \unit}, \ket{\unit \tau}$ and $\ket{\tau \tau}$. 

Although it is easy to find braids that fix $\ket{\unit \unit}$, $\ket{\unit \tau}$ and $\ket{\tau \unit}$, we do not know of any gate which fixes $\ket{\tau \tau}$ up to a phase, except for $\rho_6(\Delta)$. 

\end{remark}

%IF SUCH A GATE THAT FIXES $b_3$ WERE FOUND, THEN....? 

%\subsection{Constructing leakage free gates}
%\subsection{Proof of triviality}

%\subsection{Swapping basis}
%The above idea can be generalized further, by allowing also a braid on 4 strands which moves the first strand to the final rightmost position.  In that case, after replacing the first strand by three parallel strands, we get a product of $\Sigma, \sigma_4, \sigma_5$ and $(\sigma_3\sigma_2\sigma_1)(\sigma_4\sigma_3\sigma_2)(\sigma_5\sigma_4\sigma_3)$.  In the formulation of the theorem, the braid $(\sigma_3\sigma_2\sigma_1)(\sigma_4\sigma_3\sigma_2)(\sigma_5\sigma_4\sigma_3)$ is a product of $\Delta$ and $\sigma_1, \sigma_2, \sigma_4, \sigma_5$. \\

\subsection{Systematic computer search}

To help find leakage-free entangling gates, we performed a computer search by enumerating elements of the braid group and computing their corresponding matrices in the representation given in Section \ref{sec2}. Then we checked whether it was leakage-free, and whether it was entangling. 

We enumerated the elements of the braid group $B_6$ by taking words consisting of the generators and their inverses. We excluded trivial cases of a generator appearing adjacent to its inverse.

Due to exponential growth rate of the number of words of a given length, our search only reached words of length 7, and no leakage-free entangling gates were found.

\section{Approximate Leakage-free Entangling Braiding Gates}
\label{sec:approximate}
In this section, we provide a simple procedure which approximates certain leakage-free entangling gates with braidings to arbitrary precision. 

\subsection{Braiding gates preserving $\text{span}\{\ket{NC}, \ket{\tau \tau}\}$}
\label{subsec:braiding V}
For the $6$-anyon encoding of two qubits as shown in Figures \ref{fig:TwoQubits} and \ref{fig:NC}, we consider braiding gates that preserve the subspace $V := \text{span}\{\ket{NC}, \ket{\tau \tau}\}$. Let $V^{\perp} = \text{span}\{ \ket{\unit \unit},  \ket{\unit \tau}, \ket{\tau \unit} \}$.
%If a braiding $\rho_6(\sigma)$ preserves $V$, and its restriction on $V$ is a gate $U_1$ and its restriction on $V^{\perp} = \text{span}\{ \ket{\unit \unit},  \ket{\unit \tau}, \ket{\tau \unit} \}$ is a gate $U_2$, then in this section we write $\rho_6 = U_1 \oplus U_2$. 

\begin{figure}
    \centering

\begin{minipage}{.75in}
    \includegraphics[width=\textwidth]{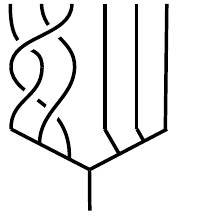}
\end{minipage}    
\; = \; 
\begin{minipage}{.75in}
    \includegraphics[width=\textwidth]{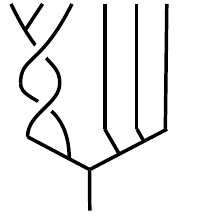}
\end{minipage}    

    \caption{The braid $\sigma_2\sigma_1\sigma_1\sigma_2$ applied to a splitting/fusion tree}
    \label{fig:2112}
\end{figure}

First, consider the braid $\sigma_2\sigma_1\sigma_1\sigma_2$, which is represented as in Figure~\ref{fig:2112} where the equality is obtained by isotopy of braids. Then direct computation shows that with respect to the decomposition $V \oplus V^{\perp}$,
\begin{equation}
\rho_6(\sigma_2\sigma_1\sigma_1\sigma_2) = \rho_3(\sigma_1^2) \oplus \diag(1,1, (R^{\tau\tau}_{\tau})^2).
\end{equation}
Similarly, %the braiding $\rho_6(\sigma_4\sigma_5\sigma_5\sigma_4)$ equals  $\rho_6(\sigma_2\sigma_1\sigma_1\sigma_2)$ on $V$ and is $\diag(1,(R^{\tau\tau}_{\tau})^2,1)$ on $V^{\perp}$. 
\begin{equation}
\rho_6(\sigma_4\sigma_5\sigma_5\sigma_4) = \rho_3(\sigma_1^2) \oplus \diag(1, (R^{\tau\tau}_{\tau})^2, 1).
\end{equation}
It can also be verified that $\rho_6(\sigma_3)$ preserves the decomposition $V \oplus V^{\perp}$, where
\begin{equation}
\rho_6(\sigma_3) = \rho_3(\sigma_2) \oplus \diag(\Rttone, \Rttt, \Rttt).
\end{equation}
Hence, through braidings from the 6-anyon encoding of two qubits, we can obtain all of the group of gates generated by $\{ \rho_3(\sigma_1^2),  \rho_3(\sigma_2) \}$ on $V$. We do not know if this group contains all the possible braiding gates on $V$. However, Proposition \ref{prop:universality} below implies that $\{\rho_3(\sigma_1^2), \rho_3(\sigma_2)\}$ is already a universal gate set on $V$.

In particular, recall the well-known result that $\{\rho_3(\sigma_1), \rho_3(\sigma_2)\}$ generates a dense subgroup of $\SU(2)$ up to phases \cite{freedman2002modular}. We prove a stronger result in the following proposition. 
\begin{proposition}
\label{prop:universality}
Let $\rho_3(\sigma_1) = \rho_3^L(\sigma_1), \rho_3(\sigma_2) = \rho_3^L(\sigma_2)$ be the $1$-qubit gates given in Equations \ref{eq:rho3L_sigma1}, \ref{eq:rho3L_sigma2}. Then $\{\rho_3(\sigma_1^2), \rho_3(\sigma_2^2)\}$ generate a dense subgroup of $\SU(2)$ up to global phases.
\begin{proof}
Let $U_1, U_2 \in \SU(2)$. By the classification of subgroups of $\SU(2)$, if $U_1$ and $U_2$ have infinite order and they do not commute up to phases, then $\{U_1, U_2\}$ generate a dense subgroup of $\SU(2)$. Take $U_1 = \rho_3(\sigma_1^2 \sigma_2^4), U_2 = \rho_3(\sigma_1^2 \sigma_2^6)$. Then it is straightforward to check $U_1$ and $U_2$ do not commute.

To show that they have infinite order, we show that their eigenvalues are not $m$-th roots of unity for any integer $m$, or equivalently that their real parts are not the cosine of a rational multiple of $\pi$. Normalizing determinants to equal 1, the real part of the eigenvalues of $e^{\frac{i \pi}{10}} \rho_3(\sigma_1^2 \sigma_2^4)$ and $e^{\frac{i \pi}{10}} \rho_3(\sigma_1^2 \sigma_2^6)$ are given (respectively) by:

\[ \frac{-2 + \sqrt{5}}{2} \text{ and } \frac{-3 + \sqrt{5}}{2} \]

Neither real part given above is the value of cosine at a rational multiple of $\pi$, and thus it follows from  \cite{tangsupphathawat2014algebraic} that both elements are infinite order.
\end{proof}
\end{proposition}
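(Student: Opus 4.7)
The plan is to invoke the classification of closed subgroups of $\SU(2)$: modulo the center, the closed subgroups of $\SU(2)$ correspond to closed subgroups of $\SO(3)$, namely finite groups, the rotation group $\SO(2)$ about some fixed axis, its dihedral-type enlargement $O(2)$, and $\SO(3)$ itself. If $U_1, U_2 \in \SU(2)$ each have infinite order modulo phases, they cannot lie in any finite subgroup; and if they further fail to commute modulo phases, they cannot both lie inside a common $O(2)$, since the non-identity coset $O(2) \setminus \SO(2)$ consists entirely of involutions. Under these two hypotheses, the closure of $\langle U_1, U_2 \rangle$ modulo phases is forced to be all of $\SO(3)$, and hence $\{U_1, U_2\}$ generates a dense subgroup of $\SU(2)$ up to global phases. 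This reduces the problem to exhibiting two words in $\rho_3(\sigma_1^2)$ and $\rho_3(\sigma_2^2)$ that satisfy both hypotheses.

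First I would search by direct matrix multiplication for a pair of short words in $\sigma_1^2$ and $\sigma_2^2$ whose images fail to commute. Once the entries from (\ref{eq:rho3L_sigma1})--(\ref{eq:rho3L_sigma2}) are substituted and squared, this is a routine check: generic pairs of words of length $4$ or $6$ should already work, and one can simply exhibit a witness by computing the commutator and observing that it is not a scalar multiple of $I_2$.

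The main obstacle will be verifying that each chosen word has infinite order modulo phases. Equivalently, after dividing by a suitable phase to normalize the determinant to $1$, the eigenvalues should not be roots of unity, i.e., the real part of an eigenvalue must not equal $\cos(\pi q)$ for any $q \in \mathbb{Q}$. Since all matrix entries lie in the cyclotomic-quadratic field $\mathbb{Q}(\phi, e^{i\pi/5})$, the resulting real parts are algebraic numbers expressible in terms of $\sqrt{5}$. At this step I would invoke a Niven-type theorem classifying the algebraic values of $\cos$ at rational multiples of $\pi$, as in \cite{tangsupphathawat2014algebraic}, and verify by inspection that the candidate real parts do not appear on the short list of exceptional values. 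If the first candidate words happen to produce traces that are Niven exceptions, I would replace them by slightly longer products in $\sigma_1^2$ and $\sigma_2^2$, since such coincidences are highly non-generic and a second attempt will almost certainly succeed.
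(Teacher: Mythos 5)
Your proposal follows essentially the same route as the paper: reduce via the classification of closed subgroups of $\SU(2)$ to exhibiting two non-commuting words of infinite order modulo phases, and certify infinite order by showing the normalized eigenvalues' real parts are not cosines of rational multiples of $\pi$ using the Niven-type result of \cite{tangsupphathawat2014algebraic}. The paper simply carries out the search you defer, taking $\rho_3(\sigma_1^2\sigma_2^4)$ and $\rho_3(\sigma_1^2\sigma_2^6)$ as witnesses, whose normalized eigenvalue real parts $\frac{-2+\sqrt{5}}{2}$ and $\frac{-3+\sqrt{5}}{2}$ are not exceptional values.
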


In Section \ref{subsec:approximation}, we will combine the fact that $\{\rho_3(\sigma_1^2), \rho_3(\sigma_2)\}$ is a universal gate set on $V$  together with some techniques developed in Section \ref{subsec:iteration} to provide a simple scheme to approximate certain 2-qubit leakage-free, entangling gates using braidings.

\subsection{Iteration to diagonal gates}
\label{subsec:iteration}
Let $D \in \U(2)$ be any diagonal gate and write it as $D = \gamma\diag(e^{-i \frac{\theta}{2}}, e^{i \frac{\theta}{2}})$ for $ -\pi \leq \theta \leq \pi$ and  $\gamma \in \U(1)$. The phase $\gamma$ will not play a role below, so we also write $D = D(\theta)$. Let $U_0 \in \U(2)$ be any 1-qubit gate.  Consider the sequence $\{U_k\}_{k=0}^{\infty}$ defined inductively by the formula:
\begin{align}
\label{equ:U_k}
    U_{k+1} = U_{k}\cdot D(\theta) \cdot U_{k}^{-1}\cdot  D(\theta)\cdot  U_{k} \cdot D(\theta)^{-2}
\end{align}
Obviously, $U_{k}$ does not depend on the phase $\gamma$. For $\theta = 0$, then $U_{k} = U_0$ for all $k$.
\begin{lemma}
\label{lem:iteration}
If $-\frac{\pi}{2} < \theta < \frac{\pi}{2}$, $\theta \neq 0$, and $|(U_0)_{12}| < 1$, then the sequence $\{U_k\}$ defined in Equation~(\ref{equ:U_k}) converges to a diagonal gate.
\begin{proof}
It suffices to consider the case $U_0 \in \SU(2)$ since by Equation~\ref{equ:U_k}, if $U_k$ has a global phase, then $U_{k+1}$ has the same global phase.

Let $\lambda = e^{i \theta}$, $\delta = |(U_0)_{12}| < 1$, and 
\begin{equation}
    U_{k} = \left(
    \begin{array}{cc}
    a_k & -\overline{b_k} \\
    b_k & \overline{a_k}\\
    \end{array} \right).
\end{equation}
We first show that there exists $\epsilon = \epsilon(\theta, \delta) < 1$ such that $|b_{k+1}| \leq \epsilon |b_{k}|$, which implies that $\{|b_{k}|\}$ converges to $0$. By direct calculation, 
\begin{align*}
    |b_{k+1}| = |b_k| y_k,
\end{align*}
where
\begin{align}
\label{equ:y_k}
    y_k &= |(1-|b_k|^2)(1-\lambda + \lambda^2) + |b_k|^2 \lambda| \\
        &= |(\lambda + \bar{\lambda} - 2)(1-|b_k|^2)+1| \\
        &= |(2-2\cos(\theta))(1-|b_k|^2) - 1|.
\end{align}
It is clear that $y_k \leq 1$. Hence $|b_{k+1}| \leq |b_{k}| \leq  \delta$. In turn, setting  $\epsilon := \max\{|1-2\cos(\theta)|, (2-2\cos(\theta))(1-\delta^2) - 1\}$, we have $y_k \leq \epsilon$. By our assumption on $\theta$, both of the two expressions in $\max\{\cdot,\cdot\}$ are strictly less than one, and hence $\epsilon < 1$.

That $|b_{k+1}| \leq \epsilon|b_k|$ implies the statement in the lemma. Intuitively, when $k$ gets large, $U_k$ is close to a diagonal gate, and hence approximately commutes with $D(\theta)$. By Equation \ref{equ:U_k}, $U_{k+1}$ would be approximately equal to $U_{k}$. The following is a more elementary argument. Again by direct calculations, 
\begin{align}
    a_{k+1} &= a_k(1- |b_k|^2(\lambda-1)^2).
\end{align}
Hence,
\begin{align}
    |a_{k+1}-a_{k}| \ =\ |a_k|\cdot |\lambda-1|^2 \cdot |b_k|^2  \ \leq \ c \epsilon^{2k} 
\end{align}
for some constant $c > 0$, which implies that the sequence $\{a_k\}$ converges.  
\end{proof}
\end{lemma}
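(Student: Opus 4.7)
The plan is to read the recursion (\ref{equ:U_k}) as a discrete dynamical system on $\U(2)$ whose fixed-point set contains every diagonal gate, and then to show that the off-diagonal mass of $U_k$ contracts geometrically to zero. The design of (\ref{equ:U_k}) is tailored to this: if $U_k$ commutes with $D = D(\theta)$, then
\[ U_k D U_k^{-1} D U_k D^{-2} \;=\; D^2 U_k D^{-2} \;=\; U_k, \]
so diagonal gates are fixed points. Moreover, because the recursion contains $U_k$ three times and $U_k^{-1}$ once, a global scalar phase on $U_0$ passes through to every $U_k$ unchanged; and because $D$ appears with net exponent $1+1-2=0$, the phase $\gamma$ of $D$ drops out entirely. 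It therefore suffices to take $U_0 \in \SU(2)$ and write $U_k = \begin{pmatrix} a_k & -\overline{b_k} \\ b_k & \overline{a_k} \end{pmatrix}$ with $|a_k|^2 + |b_k|^2 = 1$.

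The core of the argument is a direct computation of the $(1,1)$ and $(2,1)$ entries of $U_{k+1}$. Setting $\lambda = e^{i\theta}$ and using $|a_k|^2 + |b_k|^2 = 1$ repeatedly to collapse cross terms, one should obtain the closed forms
\[ b_{k+1} \;=\; \lambda^{-2}\, b_k\,\bigl[(1-|b_k|^2)(1-\lambda+\lambda^2) + |b_k|^2 \lambda\bigr], \qquad a_{k+1} \;=\; a_k\bigl[1 - |b_k|^2 (\lambda-1)^2\bigr]. \]
Taking absolute values and simplifying via $\lambda + \bar\lambda = 2\cos\theta$ yields $|b_{k+1}| = |b_k|\cdot|(2-2\cos\theta)(1-|b_k|^2) - 1|$, which is the absolute value of a linear function of $s := 1 - |b_k|^2$. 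Since this factor is at most $1$, we have $|b_{k+1}| \leq |b_k|$, so $s$ stays in $[1 - |b_0|^2,\, 1]$ and the linear function attains its supremum $\epsilon$ at one of the two endpoints. The hypothesis $0 < |\theta| < \pi/2$ puts $c := 2 - 2\cos\theta$ in $(0, 2)$, making $|c-1| < 1$; together with $|b_0| < 1$ this forces $cs_0 \in (0, 2)$ as well, so $|cs_0 - 1| < 1$. Hence $\epsilon < 1$ and $|b_k| \to 0$ geometrically. Finally, $|a_{k+1} - a_k| \leq |b_k|^2 \cdot |\lambda-1|^2 \leq 2\epsilon^{2k}|b_0|^2$ is summable, so $\{a_k\}$ is Cauchy; the limit matrix is diagonal because $|b_k| \to 0$.

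The main obstacle is the matrix computation behind the two displayed formulas: the product $U_k D U_k^{-1} D U_k D^{-2}$ of six $2\times 2$ factors expands into a long polynomial in $a_k$, $\overline{a_k}$, $b_k$, $\overline{b_k}$, $\lambda^{\pm 1}$, and collapsing it to the succinct forms above requires careful and repeated use of $|a_k|^2 + |b_k|^2 = 1$. Everything afterward is elementary: identifying the correct interval for $s$, checking that two explicit numerical quantities are each strictly less than $1$, and summing a geometric series.
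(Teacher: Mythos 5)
Your proposal is correct and follows essentially the same route as the paper's proof: the same reduction to $\SU(2)$, the same closed forms for $a_{k+1}$ and $b_{k+1}$, the same contraction factor $|(2-2\cos\theta)(1-|b_k|^2)-1|$ bounded by a constant $\epsilon<1$ via the endpoints of the interval for $1-|b_k|^2$, and the same summable-differences argument for $\{a_k\}$. Your endpoint formulation with absolute values on both terms is in fact a slightly cleaner statement of the paper's $\epsilon=\max\{|1-2\cos\theta|,\,(2-2\cos\theta)(1-\delta^2)-1\}$, but the substance is identical.
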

A few remarks are in order.
\begin{remark}
For $\theta = \frac{\pi}{3}$, by Equation \ref{equ:y_k}, we have $y_k = |b_k|^2$ and hence $|b_{k+1}| = |b_k|^3$.  In this case, the sequence $\{b_k\}$ converges to $0$ exponentially faster than it does for a general $\theta$ as in the proof of Lemma \ref{lem:iteration}.  The formula in Equation \ref{equ:U_k} for $\theta = \frac{\pi}{3}$ was used in \cite{reichardt2005quantum} as a scheme to approximate certain diagonal gates. To be precise, the formula in \cite{reichardt2005quantum} does not have the \lq $D^{-2}$ ' factor as in Equation \ref{equ:U_k}. This does not change the fact that the off-diagonal entries of $U_k$ converges to zero. However, without the \lq $D^{-2}$ ' factor, the $\{U_k\}$ sequence does not converge to a diagonal gate, but rather fluctuates among several diagonal gates which differ by some powers of $D$ from each other.
\end{remark}
\begin{remark}
In \cite{reichardt2012systematic, carnahan2016systematically}, a formula different from that in Equation \ref{equ:U_k} was provided to give rise to a sequence $\{U_k\}$ which converges at an even higher rate: $|(U_{k+1})_{1,2}| = |(U_{k})_{1,2}|^5$ for $\theta = \frac{\pi}{5}$. However, their formula does not apply here. This is because $D(\frac{\pi}{5}) = \rho(\sigma_1)^3$ up to phases, and as will be seen in Section \ref{subsec:approximation}, we will give a scheme to approximate 2-qubit entangling gates with braids that preserve the subspace $V := \text{span}\{\ket{NC}, \ket{\tau\tau}\}$. However, the braids that preserve the subspace $V$ do not seem to realize the gate $\rho(\sigma_1)^3$ on $V$, but only $\rho(\sigma_1)^2$ instead.
\end{remark}
\begin{remark}
There is a geometric interpretation of the formula in Equation \ref{equ:U_k}. If we think of a 1-qubit gate $U \in \SU(2)$ as a rotation in $\Real^3$, then $D(\theta)$ is a rotation around the $z$-axis by the angle $\theta$. A unitary $U$ has an axis in the $xy$-plane if and only if its $(1,2)$-entry has norm one. Then by Lemma \ref{lem:iteration}, as long as $\theta$ has absolute value strictly between 0 and $\frac{\pi}{2}$ and the axis of $U_0$ is not in the $xy$- plane, then each iteration in Equation~(\ref{equ:U_k}) brings the axis of $U_k$ closer to the $z$-axis. In the limit $U_k$ becomes a rotation around the $z$-axis.
\end{remark}

\subsection{Approximation of 2-qubit  Leakage-free Entangling Braiding Gates}
\label{subsec:approximation}

We provide a scheme to approximate certain 2-qubit  leakage-free entangling gates with braidings. Of course, since the Fibonacci model is universal, one can in principle approximate arbitrary $n$-qubit gates using (for instance) the Solovay-Kitaev algorithm. However, the procedure we give is more explicit and simpler.  

We use the braiding gates from $\mathcal{G} :=\langle \rho_6(\sigma_2\sigma_1\sigma_1\sigma_2), \rho_6(\sigma_3) \rangle$ for the approximation. Recall that $V = \text{span}\{\ket{NC}, \ket{\tau\tau}\}$, $V^{\perp} = \text{span}\{ \ket{\unit \unit},  \ket{\unit \tau}, \ket{\tau \unit} \}$, and that gates in $\mathcal{G}$ all preserve $V$.  Choose any gate $\tilde{U}_0$ and a diagonal gate $\tilde{D}$ in $\mathcal{G}$ such that $D:=\tilde{D}\res{V}$ and $U_0:= \tilde{U}_0\res{V}$ satisfy the conditions in Lemma \ref{lem:iteration}. We then obtain a sequence of gates $\{\tilde{U}_k\}$ by the formula in Equation \ref{equ:U_k} starting from $\tilde{U}_0$ and $\tilde{D}$. Note that $\tilde{U}_k = \tilde{U}_0$ on $V^{\perp}$ for all $k\,'$s. By Lemma \ref{lem:iteration}, $\{\tilde{U}_k\}$ converges to some $\tilde{U}$ such that $\tilde{U}\res{V}$ is a diagonal gate and $\tilde{U}\res{V^{\perp}} = \tilde{U}_0\res{V^{\perp}}$ is also a diagonal gate. Hence $\tilde{U}$ is a leakage-free diagonal gate. In general it is straightforward to check whether $\tilde{U}$ is entangling for each particular choice of $\tilde{D}$ and  $\tilde{U}_0$ since $\tilde{U}$ agrees with $\tilde{U}_0$ on $V^{\perp}$. If $\tilde{U} = \diag(\lambda_{-1},\lambda_0, \lambda_1,\lambda_2,\lambda_3)$ under the basis $\{\ket{NC}, \ket{\unit\unit},\ket{\unit\tau}, \ket{\tau\unit}, \ket{\tau\tau}\}$, then $\tilde{U}$ is entangling if and only if $\lambda_3 \neq \lambda_1\lambda_2\lambda_0^{-1}$.

\begin{theorem}
Let $\tilde{D} = \rho_6(\sigma_2\sigma_1\sigma_1\sigma_2)^3, \tilde{U}_0 = \rho_6(\sigma_3)$. Then the limit of the sequence $\{\tilde{U}_{k}\}$ defined by Equation~(\ref{equ:U_k}) exists and its limit $\tilde{U}$ is a leakage-free entangling 2-qubit gate. 
\begin{proof}
With respect to the decomposition $V \oplus V^{\perp}$, we have
\begin{equation}
\tilde{D} = e^{-\frac{3\pi i}{5}}
\begin{pmatrix}
e^{-\frac{\pi i}{5}} & 0 \\
0 & e^{\frac{\pi i}{5}}\\
\end{pmatrix}
\oplus
\begin{pmatrix}
1 & 0 & 0\\
0 & 1 & 0\\
0 & 0 & e^{-\frac{2\pi i}{5}}\\
\end{pmatrix}
\end{equation}

\begin{equation}
\tilde{U}_0 = 
\left(
    \begin{array}{cc}
    e^{\frac{4\pi i}{5}}\phi^{-1} & e^{-\frac{3\pi i}{5}}\sqrt{\phi^{-1}} \\
    e^{-\frac{3\pi i}{5}}\sqrt{\phi^{-1}}    & - \phi^{-1} \\
    \end{array}\right)
\oplus
\begin{pmatrix}
e^{-\frac{4\pi i}{5}} & 0 & 0\\
0 & e^{\frac{3\pi i}{5}} & 0 \\
0 & 0 & e^{\frac{3\pi i}{5}} \\
\end{pmatrix}
\end{equation}

The angle of $\tilde{D}\res{V}$ is $\theta = \frac{2 \pi}{5}< \frac{\pi}{2}$, and the $(1,2)$-entry of $\tilde{U}_0\res{V}$ (that is, the $(1,5)$-entry of $\tilde{U}_0$) has absolute value $\sqrt{\phi^{-1}} \,\approx \, 0.786 < 1$. Hence the conditions in Lemma \ref{lem:iteration} are satisfied. $\tilde{U}$ is entangling if and only if $\tilde{U}_{5,5} \neq 1$, which can be checked numerically. 
\end{proof}
\end{theorem}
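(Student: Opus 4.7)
The plan is to split the argument cleanly into (a) convergence of $\{\tilde{U}_k\}$ and (b) the entangling property of the limit $\tilde{U}$. For convergence, I would first compute the restriction of $\tilde{D}$ and $\tilde{U}_0$ to the preserved subspace $V = \mathrm{span}\{\ket{NC},\ket{\tau\tau}\}$ using the formulas from Section~\ref{subsec:braiding V}: $\tilde{D}|_V = \rho_3(\sigma_1^2)^3 = \rho_3(\sigma_1)^6$, which is diagonal with eigenvalues $(e^{-4\pi i/5}, e^{-2\pi i/5})$, giving global phase $e^{-3\pi i/5}$ and angle $\theta = 2\pi/5$, and $\tilde{U}_0|_V = \rho_3(\sigma_2) = FRF$, whose $(1,2)$-entry has modulus $\sqrt{\phi^{-1}} \approx 0.786$. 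Since $0 < \theta < \pi/2$ and $|(\tilde{U}_0|_V)_{12}| < 1$, Lemma~\ref{lem:iteration} directly gives convergence of $\tilde{U}_k|_V$ to a diagonal gate on $V$.

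For the behavior on $V^{\perp} = \mathrm{span}\{\ket{\unit\unit},\ket{\unit\tau},\ket{\tau\unit}\}$, I would observe that both $\tilde{D}|_{V^\perp}$ and $\tilde{U}_0|_{V^\perp}$ are diagonal, hence commute, so an induction on the recursion in Equation~(\ref{equ:U_k}) forces $\tilde{U}_k|_{V^\perp} = \tilde{U}_0|_{V^\perp}$ for every $k$. Combined with the $V$-convergence, the full sequence $\{\tilde{U}_k\}$ converges to a limit $\tilde{U}$ that is diagonal in the basis $\{\ket{NC},\ket{\unit\unit},\ket{\unit\tau},\ket{\tau\unit},\ket{\tau\tau}\}$. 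In particular $\tilde{U}$ has $(1,1)$-entry of modulus one, so $\tilde{U}$ is automatically leakage-free.

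For the entangling property, write $\tilde{U} = \diag(\lambda_{-1},\lambda_0,\lambda_1,\lambda_2,\lambda_3)$ as in the paragraph preceding the theorem. From $\tilde{U}_0|_{V^\perp} = \diag(e^{-4\pi i/5}, e^{3\pi i/5}, e^{3\pi i/5})$ one computes $\lambda_1\lambda_2\lambda_0^{-1} = e^{(3+3+4)\pi i/5} = 1$, so the criterion $\lambda_3 \neq \lambda_1\lambda_2\lambda_0^{-1}$ reduces to the single condition $\lambda_3 = \tilde{U}_{5,5} \neq 1$.

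The main obstacle is that $\lambda_3$ is not the $(1,1)$-entry governed by the proof of Lemma~\ref{lem:iteration} (which tracks the off-diagonal vanishing and diagonal Cauchy behavior) but rather the $(2,2)$-entry of $\tilde{U}_k|_V$, built from an infinite product whose $k$-th factor involves the slowly-evolving $|b_k|^2$. An exact closed form appears intractable, so I would follow the authors in appealing to the geometrically exponential convergence $|b_{k+1}| \le \epsilon|b_k|$ established in the proof of Lemma~\ref{lem:iteration} (with $\epsilon = \max\{|1-2\cos(2\pi/5)|, (2-2\cos(2\pi/5))(1-\phi^{-1})-1\} < 1$) to obtain a rigorous rate that controls the tail error of the diagonal entries, and then verify numerically that a truncated partial product already lies far enough from $1$ that the guaranteed error bound cannot close the gap. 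This reduces the qualitative statement ``$\lambda_3 \neq 1$'' to a finite, certifiable numerical check, completing the proof.
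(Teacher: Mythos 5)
Your proposal is correct and follows essentially the same route as the paper: verify the hypotheses of Lemma~\ref{lem:iteration} for $\tilde{D}\res{V}$ (angle $\frac{2\pi}{5}$) and $\tilde{U}_0\res{V}$ (off-diagonal modulus $\sqrt{\phi^{-1}}$), note that the iteration is constant on $V^{\perp}$ so the limit is diagonal and hence leakage-free, and reduce the entangling condition to $\tilde{U}_{5,5}\neq 1$ via the criterion $\lambda_3\neq\lambda_1\lambda_2\lambda_0^{-1}$ (your explicit check that $\lambda_1\lambda_2\lambda_0^{-1}=1$ is exactly what the paper uses implicitly). Your added suggestion to certify the numerical check of $\lambda_3\neq 1$ using the contraction rate from Lemma~\ref{lem:iteration} is a slight strengthening of the paper's ``can be checked numerically,'' not a different approach.
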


\section{$\SU(2)_k$ anyons}
As a modular tensor category, the Fibonacci theory $\text{Fib}$ is a sub category of the anyon theory $\SU(2)_3$ whose anyon types are given by $\{0,1,2,3\}$. Explicitly, the correspondence is $\unit \leftrightarrow 0,\ \tau \leftrightarrow 2$. Moreover, $\{0,3\}$ forms the semion theory $\mathcal{S}$ and $\SU(2)_3 = \text{Fib} \boxtimes \mathcal{S}$. Also note that semion $\mathcal{S}$ is an Abelian theory and $1 = 2 \otimes 3 = 2 \boxtimes 3$. Then an important observation is as follows. In the encoding of one- and two-qubit models (Section \ref{sec:2q}), if we replace all the anyons of type $\tau$ (i.e., type 2 ) by anyons of type $1$, then the braiding gates remain the same up to (irrelevant) global phases which are contributed by the semion theory. This means that for anyons of type $1$, all the results discussed in the paper still hold. 

Now for the sequence of anyon theories $\SU(2)_k$, for $k \geq 2$ with anyon types $\{0,1,\cdots, k\}$, exactly the same models of one and two qubits (and more generally $n$-qubits) as in Section \ref{sec2} can be defined with type 1 anyons. It is known that the type $1$ anyon in $\SU(2)_k$ is braiding universal if and only if $k = 3$ or $k \geq 5$ \cite{freedman2002two}. We believe that the results presented in this paper still hold for $k \geq 5$. For instance,  $\{\rho_3(\sigma_1^2), \rho_3(\sigma_2^2)\}$ generates a dense subgroup of $\SU(2)$. Also, the method for approximating entangling leakage-free 2-qubit gates in earlier sections also applies.

\subsection{A Conjecture}
Let $\CatC$ be an anyon theory, namely, a unitary modular tensor category, and $a,b,c \in \CatC$ be anyon types. Assume $c$ is a total type of $(b,b)$. Consider the embedding $V^{a^{\otimes n}}_b \otimes V^{a^{\otimes n}}_b \subset V^{a^{\otimes 2n}}_c$ for some $n > 1$. See Figure \ref{2qudits}. We treat each $V^{a^{\otimes n}}_b$ as a qudit space. We call an anyon type $a$ to have the property of {\it \Ent} if for some $n> 1$ and anyon types $b,c$, there exists a braid $\sigma \in B_{2n}$ such that the representation of $\sigma$ on $V^{a^{\otimes 2n}}_c$ preserves, and is entangling on, the subspace $V^{a^{\otimes n}}_b \otimes V^{a^{\otimes n}}_b$.

\begin{figure}
\centering
\begin{tikzpicture}[scale = 1, thick]
\draw (0,0)node[below]{$c$} -- (0,1);
\draw (0,1) -- (1.5, 2.5) node[pos = 0.5, below]{$b$};
\draw (0,1) -- (-1.5, 2.5) node[pos = 0.5, below]{$b$};
\draw (0.5, 2.5) rectangle (2.5, 3);
\draw (-2.5, 2.5) rectangle (-0.5, 3);

\draw (0.8, 3) -- (0.8, 3.4) node[above]{$a$};
\draw (2.2, 3) -- (2.2, 3.4)node[above]{$a$};
\draw (1.5,3.2) node{$\cdots$};

\draw (-0.8, 3) -- (-0.8, 3.4) node[above]{$a$};
\draw (-2.2, 3) -- (-2.2, 3.4)node[above]{$a$};
\draw (-1.5,3.2) node{$\cdots$};
\end{tikzpicture}
\caption{Two qudits}\label{2qudits}
\end{figure}
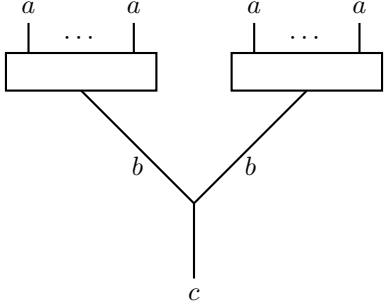

By the results in this paper, we believe that the Fibonacci anyon (or the type 1 anyon\footnote{spin $1/2$ in physics parlance.} in $\SU(2)_3$) does not have the property of \Ent. On the other hand, the type 1 anyon in $\SU(2)_k$ does have the property of \Ent \;for $k=2$ and $k=4$ \cite{wang2010topological,cui2015universal}. Thus there seems to be a tension between braiding universality and the property of \Ent. 

\begin{conjecture}
An anyon type has the property of \Ent \;if and only if the braid group representations of $B_n$ associated with it have finite images for all $n\geq 1$.
\end{conjecture}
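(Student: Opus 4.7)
The conjecture asserts a biconditional connecting a representation-theoretic property (finite image of $\rho_n(B_n)$ for all $n \geq 1$, closely related to the Property F conjecture for modular tensor categories) with a computational one (existence of entangling leakage-free 2-qudit braiding gates). My plan is to handle the two implications independently, since they have quite different flavors.

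For the direction ``finite images $\Rightarrow$ \Ent'', the idea is to leverage finiteness to guarantee existence by a combination of classification and exhaustion. If $\rho_{2n}(B_{2n})$ is finite for all $n$, then (in the cases of Property F that are established) the underlying theory is weakly integral and the image sits inside a tractable finite unitary group. I would first verify the claim explicitly in the model cases $\SU(2)_2$ and $\SU(2)_4$, where entangling leakage-free braids are already known \cite{cui2015universal}. Then I would proceed structurally: the set of braids preserving $V^{a^{\otimes n}}_b \otimes V^{a^{\otimes n}}_b$ is a subgroup of the image, and finiteness together with the rigidity of the modular data should rule out the degenerate possibility that every such braid acts non-entanglingly, by a counting/dimension argument against the non-entangling locus, which is a proper algebraic subvariety of $U(V^{a^{\otimes n}}_b \otimes V^{a^{\otimes n}}_b)$.

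For the direction ``\Ent $\Rightarrow$ finite images'', I would proceed by contrapositive. Suppose some $\rho_m(B_m)$ has infinite image. By density results for braid group representations of braiding-universal anyons, the closure of the image is essentially a product of special unitary groups modulo scalars. The crucial step is then to show that within such a dense image, every leakage-free braid is forced to be non-entangling. Two routes suggest themselves: (a) identify, for each leakage-free braid, a tensor decomposition that the braid must respect via a topological isotopy argument in the style of Theorem \ref{thm:DeltaSigma}, where $\Delta$ and $\Sigma$ arise from geometric moves that enforce a product structure; and (b) a Galois/number-theoretic argument analogous to Proposition \ref{prop:universality}, showing that the polynomial conditions defining ``leakage-free and entangling'' admit no solutions in the cyclotomic field generated by the $R$- and $F$-matrices of the theory.

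The main obstacle is unambiguously this reverse direction, which specializes to the paper's own central open conjecture for Fibonacci anyons, for which only numerical and partial structural evidence is currently available. A complete proof would likely demand either a classification of all topological constructions that yield leakage-free braids (vastly extending Section \ref{sec:3}) or a deep arithmetic obstruction ruling out exact solutions. As a concrete first step toward both directions simultaneously, I would test the conjecture computationally for small $\SU(2)_k$ with $k \geq 5$, mirroring the systematic search of Section \ref{sec:3}, and check whether the $\Delta$- and $\Sigma$-type constructions generalize and remain non-entangling there; the outcome would either produce a counterexample to the conjecture or sharpen our structural understanding of why every known leakage-free braid factors through a non-entangling gate.
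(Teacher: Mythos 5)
This statement is a \emph{conjecture} in the paper: the authors offer no proof, only motivating evidence (the non-entangling families of Theorem~\ref{thm:DeltaSigma}, the computer search, and the known entangling leakage-free braids for the type $1$ anyon in $\SU(2)_2$ and $\SU(2)_4$). Your text is therefore a research program rather than a proof, and both of its key steps have genuine gaps. For the direction ``finite images $\Rightarrow$ entangling leakage-free,'' your counting/dimension argument does not work as stated: the non-entangling gates form a \emph{subgroup} of $\U(V^{a^{\otimes n}}_b \otimes V^{a^{\otimes n}}_b)$ (as the paper notes in Section~\ref{sec:2q}), so the fact that the non-entangling locus is a proper subvariety in no way prevents the finite group of leakage-free braid images from lying entirely inside it --- exactly as happens for the gates generated by $\sigma_1,\sigma_2,\sigma_4,\sigma_5,\Delta,\Sigma$ in the Fibonacci theory. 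Finiteness of the image gives no existence principle; in the known cases ($\SU(2)_2$, $\SU(2)_4$) the entangling leakage-free braids are produced by explicit constructions, and no mechanism is proposed that would generalize them to an arbitrary theory with Property ${\bf F}$.

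For the direction ``entangling leakage-free $\Rightarrow$ finite images'' (equivalently, infinite image rules out such gates), you correctly observe that this specializes to the paper's own central open question for Fibonacci anyons, but neither of your proposed routes closes it. Route (a) would require showing that \emph{every} leakage-free braid arises from a topological construction of the $\Delta$/$\Sigma$ type that forces a product structure; the paper proves only that its particular constructions are non-entangling, with no classification of leakage-free braids (the computer search reaches word length $7$ only). Route (b) is speculative: leakage-free braids certainly exist with entries in the relevant cyclotomic field (e.g.\ $\rho_6(\Delta)$ and $\rho_6(\Sigma)$), so any arithmetic obstruction would have to isolate the entangling condition specifically, and no such argument is sketched, let alone one applying uniformly to all braiding-universal anyons. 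Your concluding suggestion of numerical experiments for $\SU(2)_k$, $k\ge 5$, is reasonable as evidence-gathering, but it cannot establish the biconditional. In short, the statement remains open, and the proposal does not supply the missing ideas in either direction.
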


The anyon of type $1$ of $\SU(2)_8$ has finite images for $B_3$ and $B_4$, but infinite images for all $B_n, n\geq 5$ \cite{freedman2002two}.

By the Property ${\bf F}$ conjecture \cite{naidu2011finiteness}, we can also formulate the above as:
\begin{conjecture}
An anyon type has the property of \Ent \;if and only if its quantum dimension is the square root of an integer.
\end{conjecture}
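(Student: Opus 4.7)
The plan is to establish this conjecture as a formal consequence of the preceding conjecture---which equates the \Ent{} property with finiteness of the braid group images $B_n \to \U(V^{a^{\otimes n}}_c)$ for all $n\geq 1$---together with the Property $\mathbf{F}$ conjecture of Naidu--Rowell cited just above. Property $\mathbf{F}$ asserts that these braid group images are all finite if and only if the quantum dimension satisfies $d_a^2 \in \Integer$, equivalently $d_a = \sqrt{m}$ for some nonnegative integer $m$. Composing the two equivalences yields exactly the stated conjecture, so conditional on these two conjectures the result is immediate; the entire content lies in the preceding conjecture, and the present formulation is essentially a restatement through the Property $\mathbf{F}$ dictionary from braid-group data to categorical data.

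To establish the preceding conjecture, I would attack it in two halves. For the direction ``infinite image implies no \Ent{}'', I would generalize the structural arguments of Section \ref{sec:3}: when the image of $B_{2n}$ in $\U(V^{a^{\otimes 2n}}_c)$ is Zariski-dense in a large unitary subgroup $H$, the subgroup of $H$ preserving the two-qudit subspace $V^{a^{\otimes n}}_b \otimes V^{a^{\otimes n}}_b$ is a concretely describable algebraic subgroup, and the goal would be to show that every braid landing in it is non-entangling---essentially that every subspace-preserving element of the topological closure factors through the group generated by local unitaries and $\Sw$. A natural route is to parametrize subspace-preserving braids via cabling and annular-braid constructions (mirroring the derivation of $\Delta$ and $\Sigma$ in Section \ref{sec:3}) and to show that each parametrized family acts as a tensor product of one-qudit operations on every fiber, combined with a density/closure argument to pass from the generated subgroup to all leakage-free gates.

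For the opposite direction ``finite image implies \Ent{}'', the braid representation factors through a finite group $G$, so the question reduces in principle to a finite check: is there an element of $G$ whose action on $V^{a^{\otimes 2n}}_c$ preserves $V^{a^{\otimes n}}_b \otimes V^{a^{\otimes n}}_b$ and is entangling on it? I would attempt to construct such elements directly, using the known finite images for $\SU(2)_2$ and $\SU(2)_4$ (cf.\ \cite{wang2010topological,cui2015universal}) as templates and then extending to arbitrary weakly-integral modular tensor categories (those conjecturally governed by Property $\mathbf{F}$) via Morita equivalence and gauging constructions. The main obstacle, and what makes the preceding conjecture genuinely open, is that the existing arguments yield \emph{ad hoc} constructions---the braids $\Delta, \Sigma$ for Fibonacci, explicit entangling gates for small $\SU(2)_k$---rather than a uniform obstruction or construction. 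Finding a categorical invariant that governs ``entangling-ness on a preserved subspace'' across modular tensor categories, most likely phrased in terms of centralizer algebras of the fusion-tree basis or of the modular data $(S,T)$, is where the real difficulty lies; without such an invariant, each direction of the equivalence requires a separate case-by-case analysis, and the implication from $d_a = \sqrt{m}$ to \Ent{} in particular has no known general mechanism beyond examples.
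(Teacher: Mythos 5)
Your first paragraph captures exactly what the paper itself does: the statement is a conjecture, not a theorem, and the paper's entire justification is the one-line remark that it is a reformulation of the preceding conjecture through the Property $\mathbf{F}$ conjecture, which equates finiteness of all braid group images with the quantum dimension being the square root of an integer. The remaining two paragraphs of your proposal are speculative strategy for the underlying open conjecture and go beyond anything the paper attempts, but you are appropriately explicit that they do not constitute a proof, so there is nothing to correct.
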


\noindent {\bf{Acknowledgement:}} K.T. and Z.W. are partially supported by NSF grant  FRG-1664351, and H.W. by DMS-1841221. S.C. acknowledges the support from the Simons Foundation.  All authors thank the IAS School of Mathematics for their support during Summer 2018, where the project started and most of the results were obtained.

\bibliographystyle{plain}
\bibliography{Fib_bib}
\end{document}